\newcommand{\ul}{\underline}
\newcommand{\mc}{\mathcal}
\newcommand{\ds}{\displaystyle}
\newcommand {\beq} {\begin{equation}}
\newcommand {\eeq} {\end{equation}}
\newcommand {\barr} {\begin{array}}
\newcommand {\earr} {\end{array}}
\newcommand {\bear} {\begin{eqnarray}}
\newcommand {\eear} {\end{eqnarray}}
\newcommand {\bears} {\begin{eqnarray*}}
\newcommand {\eears} {\end{eqnarray*}}
\def\ind{{\mathchoice {\rm 1\mskip-4mu l} {\rm 1\mskip-4mu l}
{\rm 1\mskip-4.5mu l} {\rm 1\mskip-5mu l}}}
\journalname{Dynamic Games and Applications}
\begin{document}
\title{Stochastic Differential Games and Energy-Efficient Power Control}


\author{Fran\c cois M\'eriaux         \and
        Samson Lasaulce \and
	Hamidou Tembine
}


\institute{Fran\c cois M\'eriaux \at
              Laboratoire des Signaux et Syst\`emes \\
              \email{meriaux@lss.supelec.fr}           
           \and
           Samson Lasaulce \at
              Laboratoire des Signaux et Syst\`emes \\
              \email{lasaulce@lss.supelec.fr}
	    \and
Hamidou Tembine \at
Department of Telecommunications, \\
 \'Ecole Sup\'erieure d'\'Electricit\'e  (SUPELEC) \\
\email{tembine@ieee.org}
}

\date{Received: date / Accepted: date}

\maketitle

\begin{abstract}
One of the contributions of this work is to formulate the problem of energy-efficient power control in multiple access channels (namely, channels which comprise several transmitters and one receiver) as a stochastic differential game. The players are the transmitters who adapt their power level to the quality of their time-varying link with the receiver, their battery level, and the strategy updates of the others. The proposed model not only allows one to take into account long-term strategic interactions but also long-term energy constraints. A simple sufficient condition for the existence of a Nash equilibrium in this game is provided and shown to be verified in a typical scenario. As the uniqueness and determination of equilibria are difficult issues in general, especially when the number of players goes large, we move to two special cases: the single player case which gives us some useful insights of practical interest and allows one to make connections with the case of large number of players. The latter case is treated with a mean-field game approach for which reasonable sufficient conditions for convergence and uniqueness are provided. Remarkably, this recent approach for large system analysis shows how scalability can be dealt with in large games and only relies on the individual state information assumption.

\keywords{Differential games \and Energy efficiency \and Mean-field games \and
Power control \and Wireless networks}
\end{abstract}

\section{Introduction}
\label{intro}

Power control has always been recognized as an important problem for multiuser communications \citep{Foschini-tvt-1993,yates-jsac-1995}. With the appearance of new paradigms such as ad hoc networks~\citep{gupta-cdc-1997}, unlicensed band communications, and cognitive radio~\citep{fette-book-2006,mitola-1999}, the study of distributed power control policies has become especially relevant; in such networks, terminals can freely choose their power control policies and do not need to follow orders from central nodes. 
The work reported in this paper precisely falls into this framework that is, the design of distributed power control policies in multiuser networks. More precisely, the assumed network model is a multiple access channel (MAC), which, by definition, includes several transmitters and one common receiver. A brief overview of previous works about power allocation for MACs is presented by~\cite{belmega-twc-2009}. In our framework, based on a certain knowledge which includes his individual channel state information, each transmitter has to tune his power level at each time instance. The literature of power control is vast and here we will only refer to the two closest bodies of related works. In the first body, the goal is to minimize the transmit power under constraints (data rate constraints typically). While energy minimization is sought, energy-efficiency is not necessarily high when measured in terms of a benefit to cost ratio (as it is done in Physics or Economics). Clearly, energy minimization and energy-efficiency maximization are two different approaches whose relevance depends on the context under consideration (see \cite{goodman-pc-2000} and related works for more justifications) and cannot be compared in general. The results provided in this paper concern the second body of works, in which the goal is to maximize energy-efficiency which is measured as an average number of successfully decoded bits per Joule consumed by the transmitter.

In the original formulation proposed by \cite{goodman-pc-2000} and re-used in most related works \citep[e.g.,][]{meshkati-jsac-2006,bonneau-jsac-2008,lasaulce-twc-2009,buzzi-jstp-2011}, the problem of energy-efficient power control is modeled by a sequence of static games which are played independently from stage to stage. One implicit motivation behind this choice is that, in scenarios in which the channel state (i.e., the quality of the transmitter-receiver link) corresponds to i.i.d. realizations of a given random variable, correlating the power levels from block to block is a priori not relevant. But, when there exists a strategic interaction, this approach may be very suboptimal and the main drawback of the formulation of Goodman et al is precisely that it generally leads to an outcome (Nash equilibrium) which is not efficient. Motivated by this observation, \cite{LeTreustLasaulce(PowerControlRG)10} proposed a repeated game formulation of the problem. One of the strong features of their formulation w.r.t. the famous pricing approach from \cite{saraydar-com-2002} (which also aims at improving the efficiency of the game outcome on each block) is that each transmitter only needs to have individual channel state information. Although the repeated game model by \cite{LeTreustLasaulce(PowerControlRG)10} takes into account the fact that transmitters interact several/many times, the corresponding work has one major weakness: there is a need for a normalized stage game which does not depend on the realization channels. This normalization is valid only if no player has his power constraint active and even in this case, there is a loss of optimality in terms of expected utilities (note that this optimality loss is also undergone by the static game formulation with pricing proposed by \cite{saraydar-com-2002}). This is one of the main reasons why we propose a different formulation which is based on stochastic games. One of the goals of the present paper is to study the influence of long-term strategic interaction in a game with states and long-term energy constraints (e.g., the limited battery life typically) on energy-efficient power control. Indeed, in the work of \cite{goodman-pc-2000} and related references \citep[e.g.][]{meshkati-jsac-2006,bonneau-jsac-2008,lasaulce-twc-2009,buzzi-jstp-2011}, the terminals always transmit, which amounts to considering no constraint on the available energy. More specifically, the energy-efficient power control problem in MAC under long-term energy constraints is modeled by a stochastic differential game (SDG) in which the existence of a Nash equilibrium is proven. But the problem of characterizing the performance of distributed networks modeled by SDG becomes hard and even impossible when the number of players becomes large. The same statement holds for determining individually optimal control strategies. For instance, in a previous work from \cite{meriaux-dsp-2011-a}, equilibrium control strategies are proposed but they are not optimal strategies. This is where mean-field games come into play. Mean-field games \citep{lasry-jjm-2007} represent a way of approximating a stochastic differential (or difference) game, by a much more tractable model. Under the assumption of individual state information, the idea is precisely to exploit as an opportunity the fact that the number of players is large to simplify the analysis. Typically, instead of depending on the actions and states of all the players, the mean-field utility of a player only depends on his own action and state, and depends on the others through an mean-field. It seems that the most relevant work in which mean-field games have been used for power control is given by~\cite{tembine-crowncom-2010}. Compared to the latter reference, the present work is characterized by a different utility function (no linear quadratic control assumptions is made here), a more realistic channel evolution law, and the fact that the battery level of a transmitter is considered as part of a terminal state.

The remaining of the paper is organized as follows. Section~\ref{sec: oneshot game} gives a brief review of the static game formulation of the  energy-efficient power control problem. Section~\ref{sec: SDG} introduces the stochastic dynamic game modeling the energy-efficient power control game under long-term energy constraints. In Section~\ref{sec: two cases}, two particular cases of the game are studied. The single player case highlights the impact of a long-term energy constraint on the transmitter power policy. Then the large system case is modeled by a mean-field game. Section~\ref{sec: conclusion} concludes this work.

\emph{Notations:} In the following, scalars and vectors are respectively denoted by lower case symbols and underlined lower case symbols.
The vector \newline $\underline{a}_{-k} = \left(a_1, \ldots, a_{k-1},a_{k+1},\ldots, a_K\right)$ denotes the vector obtained by dropping the $k$-th component of the vector $\underline{a}$. With a slight abuse of notation, the vector $\underline{a}$ can be written $\left(a_k,\underline{a}_{-k}\right)$, in order to emphasize the influence of its $k$-th component. $\nabla_{\underline{x}} f$ and $\Delta_{\underline{x}} f$ respectively represent the gradient and the Laplacian of the function $f$ w.r.t. the vector $\underline{x}$. $div_{\underline{x}}$ is the divergence operator w.r.t. the vector $\underline{x}$. $\langle \,,\rangle_{\mathbb{E}}$ represents the scalar product in the space $\mathbb{E}$.

\section{Review of the static game formulation of the energy-efficient power control problem}\label{sec: oneshot game}

The purpose of this section is to provide a brief review of how \cite{goodman-pc-2000} formulated the power control problem. The motivation for this is twofold. It allows us to have a reference for comparison and also allows us to build in a clearer manner the SDG formulation. The communication scenario is a multiple access channel \citep{Cover-Book-91}. There are $K\geq 1$ transmitters and one receiver. Each transmitter sends a signal to a common receiver and has to choose the power level of the transmitted signal. In order to optimize his individual energy-efficiency, i.e., the ratio of his throughput to its transmit power, each transmitter has not only to adapt his power level to the quality of the channel or link between him and the receiver but also to the power levels chosen by the others. The static game formulation of this problem is as follows.

\begin{definition}[Static game model of the power control problem]\label{def:static-game}
\newline The strategic form of the static power control game is a triplet \newline $\bar{\mathcal{G}} = (\mathcal{K}, \{\mathcal{P}_i\}_{i\in\mathcal{K}}, \{\bar{u}_i\}_{i\in\mathcal{K}})$ where:
 \begin{description}
   \item[$\bullet$] $\mathcal{K} = \{1,2,...,K\}$ is the set of transmitters;
   \item[$\bullet$] $\mathcal{P}_i = [0, P_i^{\mathrm{max}}]$ is the action space of player $i\in\mc{K}$;
   \item[$\bullet$] the utility function of player $i\in\mc{K}$ is given by
   \begin{equation}
\label{eq:def-of-utility} \bar{u}_i(p_1, p_2, ..., p_K)  = \frac{R
f\left(\gamma_i(p_1,p_2, ..., p_K) \right)}{p_i} \ [\mathrm{bit} / \mathrm{J}],
\end{equation}
 where:
 \begin{itemize}
   \item
 \begin{equation}\label{eq:sinr}
\gamma_i(p_1,p_2, ..., p_K) = \frac{p_i \left|\underline{h}_i \right|^2 }{\ds{\sum_{j \in \mc{K}, j \neq i}} p_j
\left|\underline{h}_j \right|^2 +\sigma^2};
\end{equation}

   \item $\forall i \in \mc{K}$, $\underline{h}_i \in \mathbb{R}^2$ is a vector of parameters which represent the quality of the channel between transmitter $i$ and the receiver;

   \item $\sigma^2$ is a constant which models the communication noise effects at the receiver;

   \item $R$ is a constant in [bit/s] which corresponds to the communication data rate \citep[see][]{goodman-pc-2000};

   \item the function $f:\mathbb{R}^+ \rightarrow [0,1]$ is a sigmoidal or S-shaped function which represents the packet success rate; recall that a sigmoidal function is convex up to a point and then concave from this point. Additionally, $f$ is assumed to be sufficiently regular so that $\bar{u}_i$ is differentiable on $\mc{P}_i$. See the work of e.g. \cite{rodriguez-globecom-2003,meshkati-tcom-2005,belmega-tsp-2011} for a justification.
 \end{itemize}

 \end{description}
\end{definition}

\begin{figure}
\centering
\includegraphics[width=\linewidth]{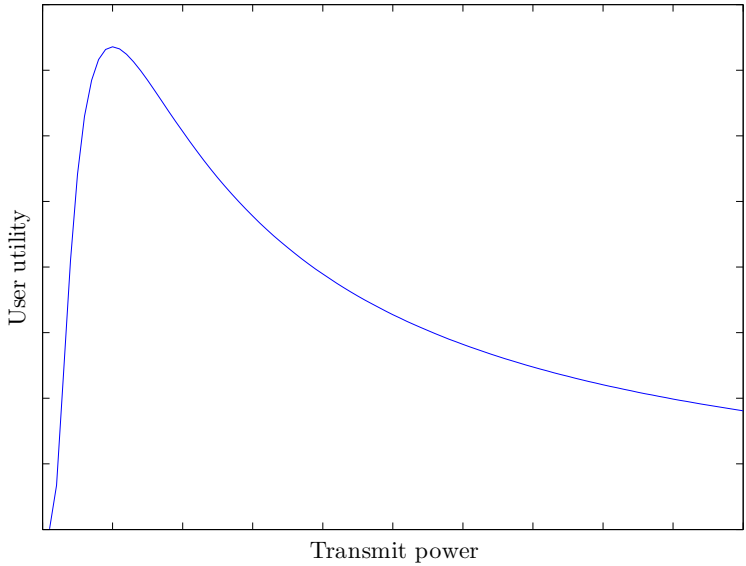}
\caption{Typical form of the utility function regarding to the transmit power.}
\label{fig: utility}
\end{figure}

The above performance metric can be seen as a tradeoff between the data rate conveyed over the air and the electromagnetic pollution (since the radiated power is concerned) in the corresponding region.
Fig.\ref{fig: utility} illustrates the typical shape of the utility function with regard to transmit power. A consequence of the use of this utility function is that the optimal transmit power generally does not make the maximum power constraint active. Precisely, this static game can be shown to be quasi-concave and has therefore a pure NE \citep[see e.g.][]{lasaulce-book-2011}. Additionally, this NE is unique and is given by:
\begin{equation}
 \forall i \in \{1,...,K \}, \
p_i^{*}= \frac{\sigma^2}{\left|\underline{h}_i \right|^2} \frac{\beta^*}{1-(K-1)\beta^{*}}
\label{eq:NE-power}
\end{equation}
where $\beta^*$ is the unique solution of the equation
\begin{equation}\label{eq: nash oneshot}
 xf'(x)-f(x)=0.
\end{equation}

It has been shown by \cite{goodman-pc-2000}, \cite{meshkati-jsac-2006}, that this equilibrium is not Pareto-optimal. Assuming that the channel coefficients are i.i.d., one could have thought that the study the one-shot energy-efficient power control problem was sufficient to understand problems of energy-efficient power control over time. But even with i.i.d. channel coefficients, there is a new phenomenon when the game is played over time which is not taken into account by the one-shot game: strategic interactions over time.

Moving to repeated game is a way to take into account these interactions and  \cite{LeTreustLasaulce(PowerControlRG)10} have shown that more efficient solutions can be obtained in such a framework. However, to account for channel variations and varying energy level in the battery, using repeated games is not sufficient. Stochastic games become necessary and are useful to further improve efficiency.
The main problem is that both the analysis and algorithm design become complicated. This is why we propose to use mean-field games. Therefore, the merit of the proposed approach is to apply stochastic games in the energy-efficient power control framework to obtain efficient solutions and simplify both the analysis and design of algorithms when the system is large. In particular, only the individual channel state information, battery state, and the interference level at the receiver are needed to implement the proposed power control policy. This is an attractive practical feature which is not available for competitive approaches based on stochastic games with finite number of players.

\section{A stochastic differential game formulation}\label{sec: SDG}

In this section, we present an SDG which is built from the static game described in the previous section. Time is assumed be continuous that is, $t\in\mathbb{R}$. This assumption has been discussed in other works on power control \citep[see e.g.][]{Foschini-tvt-1993,olama-jasp-2006,tembine-crowncom-2010}. In particular, it is relevant in scenarios in which the channels are subject to fast fading or interpreted as a limiting case for slow fading channels. From now on, we will make appear explicitly the channel states as arguments of the instantaneous utility function which will be denoted by $u_i$ instead of $\bar{u}_i$. The time horizon of the game is finite, it is the interval ranging from $T$ to $T'$. However the methodology can be extended to infinite horizon (with time average payoff) because the underlying processes are ergodic (the proposed channel model is ergodic and the remaining energy dynamics is ergodic).

\begin{definition}[SDG model of the power control problem]\label{def-sdg}
\newline The stochastic differential power control game is defined by the $5-$tuple\\ $\mc{G} = \left(\mathcal{K}, \{\mathcal{P}_i\}_{i\in\mathcal{K}}, \{\mc{X}_i\}_{i\in\mathcal{K}}, \{\mc{S}_i\}_{i\in\mathcal{K}}, \{U_i\}_{i\in\mathcal{K}}\right)$ where:
\begin{description}
  \item[$\bullet$] $\mc{K}$ and $\mc{P}_i$ are defined by Def. \ref{def:static-game};

  \item[$\bullet$] $\mc{X}_i$ is the state space of player $i$. The game state at time $t$ is defined by $\ul{X}_i(t) = [E_i(t), \underline{h}_i(t)]^{\text{T}} $ and follows the evolution law:
\begin{equation}\label{eq:channellaw}
     \mathrm{d} \underline{X}_i(t) =
     \left[
     \begin{array}{c}
     -p_i(t)\\
     \frac12\bigl(\underline{\mu} - \underline{h}_i(t)\bigr)
     \end{array}
     \right] \mathrm{d}t
     +
\left[
\begin{array}{c}
0\\
\eta
\end{array}
\right]  \text{d} \underline{\mathbb{W}}_i(t);
\end{equation}

  \item[$\bullet$] $E_i(t)$ is the energy available for player $i$ at time $t$;

 \item[$\bullet$] $\forall i \in \mathcal{K},\, \underline{\mathbb{W}}_i(t)$ are mutually independent Wiener processes of dimension $2$;

  \item[$\bullet$] $\underline{\mu}\geq 0$ and $  0 \leq \eta < +\infty$ are constants which physical interpretation is given in Prop.~\ref{prop: cdc};

    \item[$\bullet$] $\mc{S}_i$ is the set of feedback control policies for player $i$ \citep[see e.g.][]{tamer99}. A control policy will be denoted by $p_i(T\rightarrow T')$ with $p_i(t) = p_i(t,\underline{X}(t))$, and $T$, $T'$ two reals such that $T'\geq T$;

  \item[$\bullet$] the average utility function $U_i$ is defined by:
\begin{equation}
U_i\bigl(\ul{p}(T\rightarrow T')\bigr) =   \mathbb{E} \biggl[\int_T^{T'}{u_i(\underline{p}(t), \underline{X}(t)) \text{d}t} + q(\underline{X}(T')) \biggr],
\end{equation}
where $\ul{p}(T\rightarrow T')=\bigl(p_1(T\rightarrow T'), p_2(T\rightarrow T'), ..., p_K(T\rightarrow T')\bigr)$ is the control strategy profile, $\ul{X}(t) = [\ul{X}_1(t), \ul{X}_2(t),...\ul{X}_K(t)]^\mathrm{T}$ is the state profile, $q(\underline{X}(T'))$ is the utility at the final state, and $u_i$ is the instantaneous utility.
\end{description}
\end{definition}

The motivations for selecting the proposed dynamics for the state $\mc{X}_i$ are as follows. The term $ \mathrm{d} E_i(t) = -p_i(t) \mathrm{d} t$ means that the variation of energy during $\mathrm{d} t$ is proportional to the consumed power for the transmission. Indeed, the proposed model accounts for a cost when transmitting. This is fully relevant for transmitters having a finite amount of energy at disposal like cell phones, unplugged laptops, small base stations which have to be autonomous energetically speaking, etc. Such terminals have a battery with a finite amount of energy over a certain period of time and need to be recharged when empty. Our model holds over a horizon which lies strictly between two recharging instants. Over such a horizon, the available energy is a non-increasing function of the time. Although the assumed evolution law can be used for both fast and slow fading, in practice, when implementing a discrete-time version of the control policy, the energy decrease from sample to sample is stronger in the second case. Taking into account the energy of the battery in the game model changes the outcome of the game. For instance, if the battery of a transmitter is empty, the optimal power level has to be zero and cannot be the power levels recommended by the static game approach such as those given by \cite{goodman-pc-2000} or \cite{saraydar-com-2002}.

As far as time horizons are concerned, $T$ and $T'$ can be chosen arbitrarily provided that $T'\geq T$. Of course, the statistics of the evolution law are required to be stationary on this interval. In particular, $\underline{\mu}$ and $\eta$ have to be fixed. As for the order of horizon measured in second, it depends on the targeted application. Note that, in practice, if those parameters need to be known and updated, appropriate estimation schemes (with well-chosen time windows) are implemented. Let us consider two cases, fast and slow power control. For fast power control (the statistics are given by the path loss which is fixed), if the evolution approximates a discrete-time power control problem for which the (fast fading) channel is i.i.d. from block to block (in this case, $\eta$ is large), $T'-T$ is of the order of the second (a packet duration is typically 1 ms in cellular systems). In practice, most often, power levels are updated from block to block (meaning that typical updating frequency ranges from 100 Hz to a few kHz). As shown by \cite{Foschini-tvt-1993}, deriving continuous-time power control policies are still useful since it is possible to build practical discrete-time algorithms from them. Additionally, when one studies the convergence of these discrete time algorithms, often, it amounts to studying continuous-time dynamics. For slow power control, if the evolution law represents the variation of the pathloss/shadowing/slow fading (as assumed by \cite{olama-jasp-2006} and related works), the time horizon is much larger, typically of the order of a minute or more (this depends on the mobile velocity of course). For both fast and slow power control, the number of samples to approximate the integrals can be of the same order since the updating frequencies are different.

Although being simple, the dynamics for the channel gain $\ul{h}_i(t)$ capture several typical effects in wireless communications. Before commenting on these effects, let us state a property of the random process $\ul{h}_i(t)$.
\begin{proposition}[Channel dynamics property] \label{prop: cdc}
\newline Let $\ul{h}_i(t) = (x_i(t), y_i(t))$ be governed by the dynamics defined in the SDG $\mc{G}$, then we have that:
\begin{equation}
\begin{array}{ll}
 \ds{\lim_{t\to + \infty}} \mathbb{E}[\underline{h}_i(t)] = \ul{\mu}, \\
\ds{\lim_{t\to + \infty}} \mathbb{E}[|\underline{h}_i(t)|^2] - \mathbb{E}[|\underline{h}_i(t)|]^2 = 2\eta^2.
\end{array}
\end{equation}
The stationary probability density functions $m_x:\; \mathbb{R} \to \mathbb{P}(\mathbb{R})$, $m_y:\; \mathbb{R} \to \mathbb{P}(\mathbb{R})$ of the two components $x_i$, $y_i$ of $h_i$ are:
\begin{equation}
\left\{
\begin{array}{cc}
 m_x(x_i) = \frac{1}{\eta \sqrt{2 \pi}} e^{-\frac{(x_i - \mu_x)^2}{2 \eta^2}}, \\
 m_y(y_i) = \frac{1}{\eta \sqrt{2 \pi}} e^{-\frac{(y_i - \mu_y)^2}{2 \eta^2}}.
\end{array}
\right.
\end{equation}
\end{proposition}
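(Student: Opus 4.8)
The plan is to recognize that, coordinate by coordinate, $\ul{h}_i(t)$ is a scalar Ornstein--Uhlenbeck process and to exploit the explicit solution of the corresponding linear SDE. Writing $\ul{\mu}=(\mu_x,\mu_y)$ and splitting the two-dimensional Wiener process $\ul{\mathbb{W}}_i$ into its independent scalar components, the first coordinate obeys $\d x_i(t)=\tfrac12(\mu_x-x_i(t))\,\d t+\eta\,\d W_i^x(t)$, and similarly for $y_i(t)$. Multiplying by the integrating factor $e^{t/2}$ and integrating from the initial time $T$ gives
\[
x_i(t)=\mu_x+\bigl(x_i(T)-\mu_x\bigr)e^{-(t-T)/2}+\eta\int_T^t e^{-(t-s)/2}\,\d W_i^x(s),
\]
which in particular exhibits $x_i(t)$ — hence $\ul{h}_i(t)$ — as a Gaussian process whenever the initial state is deterministic (or Gaussian), since the stochastic integral of a deterministic $L^2$ integrand is Gaussian.

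The two limits then follow by computing the first two moments. As the It\^o integral has zero mean, $\mathbb{E}[x_i(t)]=\mu_x+(x_i(T)-\mu_x)e^{-(t-T)/2}\to\mu_x$ and likewise $\mathbb{E}[y_i(t)]\to\mu_y$, which yields the first assertion. For the second moment, the It\^o isometry gives $\mathrm{Var}(x_i(t))=\eta^2\int_T^t e^{-(t-s)}\,\d s=\eta^2\bigl(1-e^{-(t-T)}\bigr)\to\eta^2$, and the same for $y_i(t)$; since the two coordinates are driven by independent noises, $\mathbb{E}\bigl[|\ul{h}_i(t)|^2\bigr]-\bigl|\mathbb{E}[\ul{h}_i(t)]\bigr|^2=\mathrm{Var}(x_i(t))+\mathrm{Var}(y_i(t))\to 2\eta^2$.

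Finally, the stationary densities are obtained either by letting $t\to\infty$ in the Gaussian law of $x_i(t)$ identified above — whose mean converges to $\mu_x$ and variance to $\eta^2$ — or, equivalently, by checking directly that $m_x$ is the unique decaying solution of the stationary Fokker--Planck equation $0=-\partial_x\bigl[\tfrac12(\mu_x-x)m_x\bigr]+\tfrac{\eta^2}{2}\partial_{xx}m_x$, which after one integration reduces to the separable first-order ODE $m_x'/m_x=(\mu_x-x)/\eta^2$. There is no real obstacle in this argument, which is the textbook analysis of an OU process; the only points needing care are the decoupling of the two coordinates via the independence of the components of $\ul{\mathbb{W}}_i$, and the fact that the quantity converging to $2\eta^2$ is the excess of $\mathbb{E}[|\ul{h}_i(t)|^2]$ over the squared norm $|\mathbb{E}[\ul{h}_i(t)]|^2$ of the mean vector (rather than over the square of the mean norm).
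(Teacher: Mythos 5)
Your proof is correct, and it reaches the same conclusions by a slightly different, arguably cleaner route than the paper. The paper's Appendix A works at the level of moments: it applies It\^o's formula to obtain ODEs for $\mathbb{E}[\underline{h}_i(t)]$ and $\mathbb{E}[x_i(t)^2]$, solves those ODEs with an integrating factor, and passes to the limit; the stationary densities are then obtained exactly as in your second alternative, by verifying that the Gaussian solves the stationary Fokker--Planck equation. You instead solve the linear SDE pathwise, which identifies $x_i(t)$ as Gaussian for each $t$, and read off the mean and variance from the zero-mean property of the It\^o integral and the It\^o isometry. What your approach buys is a stronger intermediate statement (the full time-$t$ law, hence convergence in distribution to $N(\mu_x,\eta^2)$, not merely that the Gaussian is a stationary solution of the forward equation) at the cost of assuming a deterministic or Gaussian initial condition for that part of the argument; the moment computation itself does not need this assumption, and neither does the paper's. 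You are also right to flag the discrepancy between $\mathbb{E}[|\underline{h}_i(t)|]^2$ as literally written in the statement and the quantity $|\mathbb{E}[\underline{h}_i(t)]|^2$ that actually makes the claimed limit $2\eta^2$ come out: the paper's own proof computes $\lim_t \mathbb{E}[|\underline{h}_i(t)|^2]=|\underline{\mu}|^2+2\eta^2$ and then silently subtracts $|\underline{\mu}|^2$, so it commits the same abuse of notation that you explicitly correct.
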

The proof of this result is simple and provided in Appendix \ref{ap:proof1}. This shows that the proposed dynamics allow one to model Rician channels namely, channels with zero-mean gains; this is possible by tuning $| \ul{\mu} | \geq 0$ which represents the Rice component. Also, by choosing the variance $2\eta^2$ in an appropriate manner, one can account for the fading effects. As a relevant comment, note that the assumed channel dynamics can also be seen as a limiting case of the important Gauss-Markov discrete-time model used to model time correlation for the channel gains \citep{agarwal-tit-2012}. To conclude on the choice of these dynamics, we will see in Sec. \ref{sec: two cases} that they also possess an interesting property for the mean-field dynamics under investigation.

At this point, we can define a Nash equilibrium of the SDG $\mc{G}$ and state our existence result.

\begin{definition}[Nash equilibrium of $\mc{G}$]
\newline A control strategy profile $\underline{p}^*(t,\underline{X}(t))=(p^*_1(t,\underline{X}(t)),
\ldots,p^*_K(t,\underline{X}(t))$
is a feedback Nash equilibrium of the SDG if and only if $\forall i \in \mathcal{K}$, $p_i^*$ is a solution of the control problem
\begin{equation}
\sup_{p_i(T \to T')}\mathbb{E} \biggl[\int_T^{T'}{u_i\biggl(p_i(t,\underline{X}(t)),
\underline{p}^*_{-i}(t,\underline{X}(t)), \underline{X}(t)  \biggr) \text{d}t} + q(\underline{X}(T')) \biggr],
\end{equation}
subject to
\begin{equation}
     \mathrm{d} \underline{X}(t) =
     \left[
     \begin{array}{c}
     -[p_i(t),\ul{p}^*_{-i}(t)]^{\text{T}}\\
     \frac12\bigl( \ul{\ind} \otimes \underline{\mu} - \underline{h}(t)\bigr)
     \end{array}
     \right] \mathrm{d}t
     +
\left[
\begin{array}{c}
0\\
\eta
\end{array}
\right]  \text{d} \underline{\mathbb{W}}(t),
\end{equation}
where $\ul{\ind} = (1, 1, ..., 1) \in \mathbb{R}^K$, $\ul{h}(t) = [\ul{h}_1(t), \ul{h}_2(t), ...\ul{h}_K(t) ]^\mathrm{T}$, $\otimes$ stands for the Kronecker product, and $\underline{\mathbb{W}}(t) = [\ul{\mathbb{W}}_1(t), \ul{\mathbb{W}}_2(t), ..., \ul{\mathbb{W}}_K(t)]^{\text{T}}$.
\end{definition}

Regarding to the existence of a Nash equilibrium, one can state the following proposition (the function $f$ is defined by Def.~\ref{def:static-game}).
\begin{proposition}[Existence of a Nash equilibrium in $\mc{G}$]
\newline A sufficient condition for the existence of a Nash Equilibrium in $\mc{G}$ is that for all $(\theta_0,\gamma_0)$ such that $f'(\gamma_0)\gamma_0-f(\gamma_0) = \theta_0\gamma_0^2$, we have that $2\theta_0 - f''(\gamma_0) \neq 0$.
\end{proposition}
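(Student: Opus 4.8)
\noindent\emph{Proof sketch.} The plan is to obtain the equilibrium through the system of coupled Hamilton--Jacobi--Bellman (HJB) equations that characterises feedback Nash equilibria of $\mc{G}$, along the classical route for stochastic differential games \citep[see][]{tamer99}. For each $i\in\mc{K}$ one introduces a value function $v_i(t,\ul{x})$ and, writing $\ul{x}=(E_1,\ul h_1,\dots,E_K,\ul h_K)$, the HJB equation
\[
 -\partial_t v_i \;=\; \sup_{p_i\in\mc{P}_i} H_i\bigl(\ul{x},\nabla v_i,p_i,\ul{p}^{*}_{-i}(t,\ul x)\bigr) \;+\; \frac{\eta^2}{2}\Delta_{\ul h} v_i,
\qquad v_i(T',\ul x)=q(\ul x),
\]
where $H_i$ gathers the instantaneous utility $u_i$ and the drift terms of \eqref{eq:channellaw} and $\Delta_{\ul h}$ is the Laplacian in the channel variables. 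Because the channel drift and the diffusion in \eqref{eq:channellaw} do not depend on any power, the only $p_i$-dependent part of $H_i$ is $\frac{R f(\gamma_i)}{p_i}-\partial_{E_i}v_i\,p_i$. Solvability of this coupled system --- hence existence of a feedback Nash equilibrium --- reduces to showing that the pointwise maximisation over $p_i$ defines an admissible (measurable, sufficiently regular) feedback map $p_i^{\star}=p_i^{\star}\bigl(\ul x,\partial_{E_i}v_i\bigr)$.

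First I would write the first-order stationarity condition. With $\gamma_i=p_i|\ul h_i|^2/\bigl(\sum_{j\neq i}p_j|\ul h_j|^2+\sigma^2\bigr)$ one has $\partial_{p_i}\gamma_i=\gamma_i/p_i$, so $\partial_{p_i}H_i=0$ reads
\[
 f'(\gamma_i)\gamma_i-f(\gamma_i)=\theta_i\,\gamma_i^2,\qquad
 \theta_i:=\frac{\partial_{E_i}v_i}{R}\cdot\frac{\bigl(\sum_{j\neq i}p_j|\ul h_j|^2+\sigma^2\bigr)^2}{|\ul h_i|^4},
\]
which is precisely the relation in the statement with $(\theta_0,\gamma_0)=(\theta_i,\gamma_i)$. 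Next I would differentiate $g(\gamma):=f'(\gamma)\gamma-f(\gamma)-\theta_0\gamma^2$ to get $g'(\gamma)=\gamma\bigl(f''(\gamma)-2\theta_0\bigr)$; thus the hypothesis $2\theta_0-f''(\gamma_0)\neq0$ says exactly that every critical SINR is a non-degenerate zero of $g$. By the implicit function theorem the critical $\gamma_i$, and therefore $p_i^{\star}$ via $p_i=\gamma_i\bigl(\sum_{j\neq i}p_j|\ul h_j|^2+\sigma^2\bigr)/|\ul h_i|^2$, is a locally $C^1$ function of $\partial_{E_i}v_i$ and of the interference-plus-noise term. Together with compactness of $\mc{P}_i=[0,P_i^{\mathrm{max}}]$ --- which guarantees that a maximiser exists, the corner values $0$ and $P_i^{\mathrm{max}}$ being treated separately and the sigmoidal shape of $f$ (Fig.~\ref{fig: utility}) singling out the relevant branch --- this yields a well-defined, regular best-response feedback. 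Plugging it back into the HJB system and invoking the standard existence theorem for solutions of such coupled semilinear parabolic systems on the finite horizon $[T,T']$ produces the value functions $v_i$, and the usual verification argument then shows that the induced profile $\ul{p}^{\star}$ is a feedback Nash equilibrium of $\mc{G}$.

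The hard part, I expect, will be passing from this local, interior-point picture to a globally well-defined and sufficiently regular best-response map and controlling its interaction with the boundary of $\mc{P}_i$: the first-order analysis only handles interior critical points, so one must rule out pathologies near $p_i=0$ --- where $u_i$ has a removable-type singularity since $f(0)=0$, giving $\frac{R f(\gamma_i)}{p_i}\to R f'(0)|\ul h_i|^2/\bigl(\sum_{j\neq i}p_j|\ul h_j|^2+\sigma^2\bigr)$ --- and argue that the S-shape of $f$ singles out a unique global maximiser depending regularly enough on the co-state to serve as an admissible feedback. A secondary technical point is the degenerate parabolicity of the system (no diffusion on the $E_i$ components) and the a priori estimates needed for solvability of the coupled HJB system; these should follow from boundedness of $u_i$ on $\mc{P}_i\times\mc{X}_i$ away from $p_i=0$, the uniformly non-degenerate linear structure of the $\ul h_i$-dynamics, and the first-order transport nature of the $E_i$-dynamics, but they have to be established with care.
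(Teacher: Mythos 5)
Your proposal follows essentially the same route as the paper: reduce existence to solvability of the coupled Hamilton--Jacobi--Bellman--Fleming system, observe that this holds if the maximised Hamiltonian is smooth, derive the first-order condition $f'(\gamma_i)\gamma_i-f(\gamma_i)=\theta_i\gamma_i^2$ with the same $\theta_i$, and apply the implicit function theorem to $g(\theta,\gamma)=f'(\gamma)\gamma-f(\gamma)-\theta\gamma^2$, whose $\gamma$-derivative $\gamma\bigl(f''(\gamma)-2\theta\bigr)$ is non-zero exactly under the stated hypothesis. The additional caveats you raise (boundary of $\mc{P}_i$, the branch selection via the threshold on $\theta_i$, degenerate parabolicity) are real but are also left at the sketch level in the paper itself, which handles the branch selection by noting a threshold $\theta_{\max}$ separating the zero and non-zero global maximisers.
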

For the proof of this result and for clarifying some points in the sequel, we will use the (auxiliary) Bellman function, which is defined by:
\begin{equation}
v_i(T,\underline{X}(T))=\sup_{p_i(T \to T')}\mathbb{E} \biggl[\int_T^{T'}{u_i(\underline{p}(t), \underline{X}(t)) \text{d}t} + q(\underline{X}(T')) \biggr].
\end{equation}

\begin{proof}
According to~\cite{Bressan10noncooperativedifferential}, a sufficient condition for the existence of a Nash equilibrium for the SDG is the existence of a solution to the Hamilton-Jacobi-Bellman-Fleming~\citep{fleming1993controlled} equation for each transmitter

\begin{equation}
\label{eq: HJBF discrete}
\begin{aligned}
0 = &\sup_{p_i(T \to T')}\biggl[ u_i(\underline{X}(t),\underline{p}(t,\underline{X}(t))) - p_i(t,\underline{X}(t)) \frac{\partial v_i(t,\underline{X}(t))}{\partial E_i}\biggr] \\
+ &\frac12 \langle \underline{\mu} - \underline{h}_i(t),\nabla_{\underline{h}_i} v_i(t,\underline{X}(t))\rangle_{\mathbb{R}^2}  +\frac{\partial v_i(t,\underline{X}(t))}{\partial t} +\frac{\eta^2}{2} \Delta_{\underline{h}_i} v_i(t,\underline{X}(t)).
\end{aligned}
\end{equation}

\noindent
There exists a solution if the function
\begin{equation}
\begin{aligned}
H\biggl(\underline{X}(t),\underline{p}_{-i}&(t,\underline{X}(t)),\frac{\partial v_i(t,\underline{X}(t))}{\partial E_i}\biggr)= \\
&\sup_{p_i(T \to T')}\biggl[ u_i(\underline{X}(t),\underline{p}(t,X(t))) - p_i(t,\underline{X}(t)) \frac{\partial v_i(t,\underline{X}(t))}{\partial E_i}\biggr]
\end{aligned}
\end{equation}

\noindent
is smooth (see~\cite{evans2010partial} for more details). And with a similar reasoning as~\cite{meriaux-dsp-2011-b}, we can show that finding optimal power control
\begin{equation}
\underline{p}^*(T \to T') = (p_1^*(T \to T'),\ldots,p_n^*(T \to T'))
\end{equation}

\noindent
such that $\forall i \in \mathcal{K}$, $p_i^*(T \to T') \in $
\begin{equation}
\arg \max_{p_i(T \to T')}\biggl[ u_i\biggl(\underline{X}(t),p_i(t,\underline{X}(t)),\underline{p}^*_{-i}(t,\underline{X}(t))\biggr) - p_i(t,X(t)) \frac{\partial v_i(t,\underline{X}(t))}{\partial E_i}\biggr]
\end{equation}

\noindent
amounts to solving $\forall i \in \mathcal{K}$, $\forall t \in [T,T']$
\begin{equation}
\label{eq : condimax}
f'(\gamma_i(t))\gamma_i(t)-f(\gamma_i(t)) = \frac{\gamma_i(t)^2}{R}\frac{\partial v_i(t,\underline{X}(t))}{\partial E_i}\biggl(\frac{\sigma^2 +\sum_{j \neq i}^n |\underline{h}_j(t)|^2 p_j^*(t)}{|\underline{h}_i(t)|^2}\biggr)^2.
\end{equation}

\noindent
Note that we consider that $\frac{\partial v_i(t,\underline{X}(t))}{\partial E_i} \geq 0$, otherwise the optimal power $p_i^*(t) \to \infty$.
The existence of a non-zero solution depends on the term
\begin{equation}
\theta_i = \frac{1}{R}\frac{\partial v_i(t,\underline{X}(t))}{\partial E_i}\biggl(\frac{\sigma^2 +\sum_{j \neq i}^n |\underline{h}_j(t)|^2p_j^*(t)}{|\underline{h}_i(t)|^2}\biggr)^2.
\end{equation}

\noindent
It can be checked that there exists a threshold $\theta_{\max}$ such that if $\theta_i < \theta_{\max}$, there exists a unique global maximizer $\gamma^*$ different from $0$ and if $\theta_i \geq \theta_{\max}$, $0$ is the global maximizer.

\noindent
We call for the implicit function theorem to state smoothness of
\newline
$H\biggl(\underline{X}(t),\underline{p}_{-i}(t,\underline{X}(t)),\frac{\partial v_i(t,\underline{X}(t))}{\partial E_i}\biggr)$, we define
\begin{equation}
\begin{aligned}
g:\;&[0,\theta_{\max}[\times\mathbb{R}^+ \to \mathbb{R} \\
&(\theta,\gamma) \to f'(\gamma)\gamma-f(\gamma) - \theta\gamma^2.
\end{aligned}
\end{equation}

\noindent
$g$ is $C^{\infty}$, then if $g(\theta_0,\gamma_0)=0$, there exists $\varphi: \mathbb{R} \to \mathbb{R}$ such that $\gamma_0 = \varphi(\theta_0)$. \newline $\varphi$ is $C^{\infty}$  and
\begin{equation}
\frac{\partial \varphi}{\partial \theta}(\theta_0) = -\frac{\frac{\partial g(\theta_0,\gamma_0)}{\partial \theta}}{ \frac{\partial g(\theta_0,\gamma_0)}{\partial \gamma}}.
\end{equation}

\noindent
In our case, it writes
\begin{equation}
 \frac{\partial \varphi}{\partial \theta}(\theta_0) = \frac{\gamma_0}{2\theta_0 - f''(\gamma_0)}.
\end{equation}
If $2\theta_0 - f''(\gamma_0) \neq 0$, then smoothness is ensured. \hfill \ensuremath{\blacksquare}
\end{proof}
Remarkably, the proposed sufficient condition holds for all particular choices of efficiency function made in the literature. In particular, it holds for the information-theoretic choice of \cite{belmega-tsp-2011}. Indeed, if $f(x) = e^{-\frac{a}{x}}$, $a\geq 0$ we have that:
\begin{equation}
 \left\{
\begin{array}{ll}
e^{-\frac{a}{\gamma_0}}(\frac{a}{\gamma_0} - 1) = \theta_0 \gamma_0^2, \\
2\theta_0 - (\frac{a^2}{\gamma_0^4}- \frac{2a}{\gamma_0^3})e^{-\frac{a}{\gamma_0}} \neq 0,
\end{array}
\right.
\end{equation}
which gives
\begin{equation}
 2\theta_0 - \biggl(\frac{a}{\gamma_0}- 2\biggr)\frac{a\theta_0}{a-\gamma_0} \neq 0.
\end{equation}

While Nash equilibrium uniqueness is an attractive property of the static game $\bar{\mc{G}}$, this property is not easy to be verified for the SDG $\mc{G}$. Rather, this type of games has generally a large number of equilibria. Concerning the explicit determination of Nash equilibrium power control policies, it has to be mentioned that this task is also not easy a priori. Precisely, as written in (\ref{eq: HJBF discrete}), the determination of a Nash equilibrium requires to solve a system of $K$ Hamilton-Jacobi-Bellman-Fleming equations, coupled by the state $\underline{X}(t)$. Interestingly, there are two limiting cases of $\mc{G}$ for which both uniqueness and existence issues are much easier. The first special case is when there is only one player (note that the corresponding optimization problem has not been studied in the literature). The second case is when the number of players is large, making the mean-field game analysis fully relevant. These two cases are the purpose of the next section.

\section{Two relevant special cases of $\mc{G}$: $K= 1$ and $K \rightarrow +\infty$}\label{sec: two cases}

\subsection{The single-player case ($K= 1$)}

One of the interests in analyzing  the single-user case is to separate the effects due to the long-term energy constraint from those due to interaction between players (two effects can incite the transmitter to be off, namely a bad channel state, and high interference level). Indeed, in the single-player case only the former effects appear. Below, it is proven through simple equations that the transmitter is not always on. The fraction of time during which the transmitter has to be off is approximated, which is of practical interest. When moving to the case of several players, players can also have an interest in not transmitting (as observed by~\cite{meriaux-dsp-2011-b}), making appropriate time-sharing policies natural equilibria. Summing up, both reducing the interference between the transmitters and long-term energy constraint can incite a transmitter to be off.

In this section, we therefore study the special case of the game $\mc{G}$ in which there is only one player. In this context, there is obviously no interaction between players and the main interest of this case is to show the influence of the long-term energy constraint.
From the preceding section, it can be seen that determining an optimal control policy amounts to solving the following equation in $\gamma_1(t)$:
\begin{equation}\label{eq:gamma-star}
 f'(\gamma_1(t))\gamma_1(t)-f(\gamma_1(t)) = \frac{\gamma_1(t)^2}{R} \frac{\partial v_1(t,\underline{X}_1(t))}{\partial E_1} \frac{\sigma^4}{|\underline{h}_1(t)|^4}.
\end{equation}
By denoting $\gamma_1^*(t)$ the largest solution of the above equation, an optimal control policy follows (see (\ref{eq:sinr})):
\begin{equation}
\label{eq:pc-scheme}
p_1^*(t) = \frac{\sigma^2}{| \underline{h}_1(t) |^2} \gamma_1^*(t)),
\end{equation}
with $\gamma_1^*(t) \geq 0 $. Interestingly, $\gamma_1^*(t) = 0$ can also occur at given time instants. In particular, this depends on the channel quality $|\underline{h}_1(t)|$. If the latter is too low, transmitting is not energy-efficient, leading to a vanishing transmit power. With the same reasoning as \cite{meriaux-dsp-2011-b}, the fraction of the time during which the transmitter is off can be assessed by using a simple lower bound on the probability that $\gamma_1^*(t) = 0$:
\begin{equation}
\mathrm{Pr}\left[\max f'' \leq 2 \frac{\partial v_1(t,\underline{X}_1(t))}{\partial E_1}\frac{\sigma^4 }{R |\underline{h}_1(t)|^4}\right] \leq \mathrm{Pr}[\gamma_1^*(t) = 0].
\end{equation}

Many Monte Carlo simulations have shown that this lower bound is reasonably tight, one of them is provided in figure \ref{fig: compa proba}. This result is of practical interest since it allows one to quantify the impact of a long-term energy constraint on power control policies, which is one of the goals of this paper.

\begin{figure}
\centering
\includegraphics[width=\linewidth]{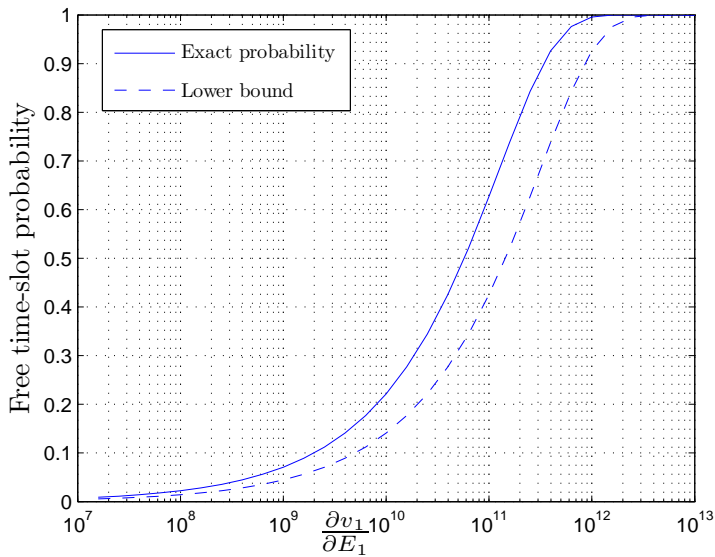}
\caption{Exact probability of the transmitter being off and lower bound of this probability depending on $\frac{\partial v_1(t,\underline{X}_1(t))}{\partial E_1}$.}
\label{fig: compa proba}
\end{figure}

\subsection{The mean-field game analysis ($K \rightarrow +\infty$)}
\subsubsection{Convergence to the mean-field game}

We have mentioned in the preceding sections that the SDG becomes more and more difficult to analyze when the number of players increases. However, our problem has a special structure which can be exploited and simplifies the problem when $K$ is large. Indeed, from a given player standpoint, what matters in terms of utility is a weighted sum of the played actions. The relevant quantity which affects the utility of player $i$ is the following quantity:
\begin{equation}
I_i(t) = \frac{1}{K} \sum_{j \in \mc{K},  j \neq i} p_j(t)|\underline{h}_j(t)|^2.
\end{equation}
The quantity $I_i(t)$ is called the interference in communication networks. Here, we have normalized this quantity.
There are many justifications of practical interest for this normalization \citep[see e.g.][]{tulino-book-04}. For example, it is fully justified in random CDMA systems \citep{meshkati-jsac-2006}. If it can be proven that if $I_i(t)$ converges, then the game $\mathcal{G}$ converges to a mean-field game. To justify the convergence here, the conventional weak law of large numbers is not applicable since the random processes $p_j(t)|\underline{h}_j(t)|^2$ are interdependent. However, there still exist some conditions under which convergence is ensured. One of them is the exchangeability or indistinguishability property which is defined next.
\begin{definition}[Exchangeability]
\newline The states $X_1, X_2, ..., X_K$ are said to be exchangeable in law under the feedback strategies $\alpha$ if they generate a joint law which is invariant by permuting the players indices , i.e.,
\begin{equation}
\forall\ K,\ \mathcal{L}\left( \underline{X}_1,\ldots,\underline{X}_K\ | \ \alpha \right)=\mathcal{L}\left( \underline{X}_{\pi(1)},\ldots, \underline{X}_{\pi(K)}\ | \  \alpha\right),
\end{equation} for any bijection $\pi$ (one-to-one mapping) defined over $\{1,\ldots,K\}.$
\end{definition}
To guarantee this property for the game under study we make the following assumptions:

\begin{itemize}
  \item each player only knows his individual state;

   \item each player implements an homogeneous admissible control:
\begin{equation}
p_i(t) = \alpha(t,\underline{X}_i(t));
\end{equation}

  \item $\mathbb{E}\biggl[\ds{\int_T^{T'}}\alpha(t,\underline{X}_i(t))^2\ \text{d}t\biggr] < +\infty$.
\end{itemize}
As a consequence of the exchangeability property in $\mathcal{G}$, the game now comprises generic players.
From now on, we call $\underline{s}(t)=[E(t),\underline{h}(t)]^{\text{T}}$ the generic individual state of a player. The state dynamics of a generic player is given by the following stochastic differential equations (SDEs):
\begin{equation}\label{eq: state_evo}
\left\{
\begin{array}{ll}
\text{d} E(t) = - \alpha(t,\underline{s}(t)) \text{d} t, \\
\text{d} \underline{h}(t) = \frac12\bigl(\underline{\mu} - \underline{h}(t)\bigr)\text{d}t + \eta \text{d} \underline{\mathbb{W}}(t).
\end{array}
\right.
\end{equation} where $\underline{\mu},\eta$ are time-independent.

Before discussing the convergence of $\mathcal{G}$ to a mean-field game, we introduce some notations to define the mean-field game concept properly.
Let $(\Omega,\mathbb{P},\mathcal{F})$
be a complete filtered probability space, on which a one-dimensional standard Brownian
motion $\mathcal{W}$ is defined with $\mathbb{F}=(\mathcal{F}_t)_{t\geq 0},\ $  being its natural filtration augmented by all the $\mathbb{P}-$null sets (sets of measure-zero with the respect $\mathbb{P}.$) The filtration $\mathcal{F}_t$ will be combined with the one generated by the initial state of the players.
Note that at time $t$ the trajectory generated by state dynamics (\ref{eq: state_evo}) is in $\mathcal{F}_t.$ The state given by the channel gains and the battery levels follows a certain distribution which evolves over time: this distribution is called the mean field.

\begin{proposition}[Convergence to the mean-field game]
 \newline If the states $(\underline{X}_i(t))_{i,t}$ and the admissible controls $(\alpha(t,\underline{X}_i(t)))_{i,t}$ preserve the indistinguishability property and the $(\underline{X}_i(0))_i$ are indistinguishable, then the stochastic differential game converges to a mean-field game.
\end{proposition}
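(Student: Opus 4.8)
The plan is to establish convergence to the mean-field game in three stages: (i) pass to the limit in the empirical measure of the states, obtaining a deterministic mean field $m_t$; (ii) show that the interference term $I_i(t)$ concentrates on a deterministic quantity driven by $m_t$; and (iii) conclude that each generic player's control problem decouples, so that the coupled system of $K$ HJBF equations collapses to a single HJBF equation coupled to a Fokker--Planck (Kolmogorov forward) equation for $m_t$. First I would introduce the empirical measure $M^K_t = \frac{1}{K}\sum_{j\in\mc{K}}\delta_{\underline{X}_j(t)}$ on $\mc{X} = \mathbb{R}^+\times\mathbb{R}^2$. By the indistinguishability (exchangeability) hypothesis on $(\underline{X}_i(t))_{i,t}$ and on the initial data $(\underline{X}_i(0))_i$, de Finetti's theorem (in its finite/Hewitt--Savage form, together with the propagation argument) gives that, as $K\to+\infty$, $M^K_t$ converges in law to a (possibly random, but here deterministic by the driving noises being independent and the control being the same deterministic feedback $\alpha$) limit measure $m_t\in\mathbb{P}(\mc{X})$. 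The key point is that the SDE system (\ref{eq: state_evo}) for a generic player is already in McKean--Vlasov form once $I_i(t)$ is replaced by its limit: the drift of $\underline{h}$ does not depend on the other players at all (it is an Ornstein--Uhlenbeck drift), and the drift of $E$ depends on the others only through $\alpha(t,\underline{s}(t))$, which is a function of the player's own state. Hence the limiting dynamics are genuinely of mean-field type and the standard coupling/propagation-of-chaos estimate (Sznitman-style, using the $L^2$ bound $\mathbb{E}[\int_T^{T'}\alpha(t,\underline{X}_i(t))^2\,\d t]<+\infty$ to control moments and to get a Grönwall inequality on $\mathbb{E}|\underline{X}^K_i(t)-\underline{X}^\infty_i(t)|^2$) yields $\underline{X}^K_i \to \underline{X}^\infty_i$ with the $\underline{X}^\infty_i$ i.i.d. and each distributed as $m_t$.

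Next I would treat the interference. Write $I_i(t) = \frac{1}{K}\sum_{j\neq i} \alpha(t,\underline{X}_j(t))|\underline{h}_j(t)|^2$. Under exchangeability each summand has the same law, and the summands become asymptotically independent by the propagation-of-chaos conclusion just obtained; so a law-of-large-numbers argument for exchangeable (asymptotically i.i.d.) arrays gives $I_i(t)\to \bar I(t) := \int_{\mc{X}} \alpha(t,\underline{s})\,|\underline{h}|^2\, m_t(\d\underline{s})$ in probability (and in $L^2$ given the moment bounds, noting $|\underline{h}|^2$ has Gaussian tails by Prop.~\ref{prop: cdc}). This is exactly the step the authors flag as delicate — the ordinary weak law is inapplicable because the $p_j(t)|\underline{h}_j(t)|^2$ are interdependent — and I expect it to be the main obstacle: one must quantify the rate of decorrelation uniformly in $t\in[T,T']$, which requires care near $t$ where the feedback $\alpha$ may be discontinuous (recall $\gamma_1^*(t)=0$ can occur on a positive-probability set, so $\alpha(t,\cdot)$ need not be continuous). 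I would handle this by working with the relaxed/measurable-selection version of $\alpha$ and appealing to continuity of $m_t$ in $t$ (parabolic smoothing from the nondegenerate $\eta\,\d\underline{\mathbb{W}}$ in the $\underline{h}$-component) so that $\bar I(t)$ is at least continuous, while allowing $\alpha$ itself to be only bounded measurable.

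Finally, with $\bar I(t)$ deterministic, the generic player faces the single optimal control problem $\sup_{\alpha}\mathbb{E}[\int_T^{T'} u(\alpha(t,\underline{s}(t)), \underline{s}(t), \bar I(t))\,\d t + q(\underline{s}(T'))]$ subject to (\ref{eq: state_evo}), whose value function $v(t,\underline{s})$ solves a single HJBF equation (the one-player analogue of (\ref{eq: HJBF discrete}), with $\sum_{j\neq i}|\underline{h}_j|^2 p_j^*$ replaced by $K\bar I(t)$ in the SINR), and the induced state law must reproduce the mean field, i.e. $m_t$ solves the Fokker--Planck equation $\partial_t m + \mathrm{div}_{\underline{s}}\big( b(t,\underline{s},m)\, m\big) - \frac{\eta^2}{2}\Delta_{\underline{h}} m = 0$ with drift $b$ read off from (\ref{eq: state_evo}) at the optimal control, and $\bar I(t) = \int \alpha^*(t,\underline{s})|\underline{h}|^2 m_t(\d\underline{s})$. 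The pair (HJBF, Fokker--Planck, consistency for $\bar I$) is by definition the mean-field game associated to $\mc{G}$; existence of a solution to this system follows from Schauder's fixed-point theorem on the map $\bar I(\cdot)\mapsto \alpha^*\mapsto m \mapsto \bar I(\cdot)$, the needed compactness coming from the parabolic regularity of the Fokker--Planck equation in the $\underline{h}$-variables and the $L^2$ bound on controls, and the well-posedness of the HJBF step from the smoothness of the Hamiltonian established in the proof of the previous proposition (the condition $2\theta_0 - f''(\gamma_0)\neq 0$). This closes the argument: the exchangeable $K$-player game converges, in the sense of its equilibrium system of PDEs and of the laws of individual trajectories, to this mean-field game. \hfill$\blacksquare$
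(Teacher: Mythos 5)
Your proposal is correct in substance and its central step coincides with the paper's: both rewrite the interference $I_i(t)$ as an integral of $|\underline{h}|^2\alpha(t,\underline{s})$ against the empirical measure $M_t^K$ plus a self-term $\frac{1}{K}\alpha(t,\underline{X}_i(t))|\underline{h}_i(t)|^2$ that vanishes by the admissibility bound, and both reduce everything to the weak convergence of $M_t^K$ to a limit $m_t$. Where you genuinely differ is in how that convergence is justified. The paper stops at exchangeability and cites an external result for the existence of $m_t=\lim_K M_t^K$; you instead observe --- correctly, and more sharply than the paper --- that under a homogeneous individual-state feedback the dynamics (\ref{eq: state_evo}) are completely decoupled across players (the interaction enters only through the payoff), so that with i.i.d.\ initial data the summands $\alpha(t,\underline{X}_j(t))|\underline{h}_j(t)|^2$ are exactly i.i.d.\ and the ordinary law of large numbers applies; the Sznitman coupling and propagation-of-chaos machinery you invoke are then not actually needed (they would be if $\alpha$ depended on the full state profile or on $M_t^K$ itself). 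This buys a self-contained argument in place of a citation, at the price of two caveats you should make explicit: a deterministic limit $m_t$ requires i.i.d.\ rather than merely exchangeable initial states, and the de Finetti/Hewitt--Savage step presumes the $K$ players are embedded in a single infinite exchangeable system rather than an arbitrary triangular array. Your third stage --- identifying the limit with the coupled HJBF/Fokker--Planck system and proving existence by Schauder --- goes beyond what the proposition asserts and beyond what the paper proves anywhere (the system (\ref{eq:HJBF-FPK}) is only stated, not shown to admit a solution), so it should be presented as a separate claim with its own proof rather than folded into this one.
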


\begin{proof}
\begin{equation}
\begin{aligned}
I_i(t) = & \frac{1}{K} \biggl[ \sum_{j=1}^K \alpha(t,\underline{X}_j(t))|\underline{h}_j(t)|^2 - \alpha(t,\underline{X}_i(t))|\underline{h}_i(t)|^2 \biggr], \\
= & \int_{\underline{s}} |\underline{h}|^2 \alpha(t,\underline{s}) M_t^K(\text{d}\underline{s}) -\frac{\alpha(t,\underline{X}_i(t))|\underline{h}_i(t)|^2}{K}, \\
\end{aligned}
\end{equation}
with
\begin{equation}
M_t^K = \frac{1}{K} \sum_{j=1}^K \delta_{\underline{s}_j(t)},
\end{equation}
where $ \delta_{\underline{s}_j(t)}$ is the Dirac measure concentrated at $\underline{s}_j(t)$.

If the number of transmitters becomes very large ($K \to \infty$), we can consider that we have a continuum of transmitters. The convergence of the interference term when $K \to \infty$ needs to be proven. Using admissible control, $\mathbb{E} [\alpha(t,\underline{X}_i(t))|\underline{h}_i(t)|^2] <  \infty$, then
\begin{equation}
\lim_{K \to \infty}\frac{\alpha(t,\underline{X}_i(t))|\underline{h}_i(t)|^2}{K} = 0.
\end{equation}
To prove $I_i(t)$ converges weakly, it suffices to prove $\int_{\underline{s}} |\underline{h}|^2 \alpha(t,\underline{s}) M_t^K(\text{d}\underline{s})$ converges weakly. A sufficient condition is the weak convergence of the process $M_t^K$. Since we have chosen  the control law that preserves the indistinguishability property, we can use the work by \cite{cdc2011} which states that there exists a distribution $m_t$ such that
\begin{equation}
m_t = \lim_{K \to \infty}M_t^K.
\end{equation}
\end{proof}
This distribution $m_t$ is the mean-field.
In our case, the evolution of the state of the transmitters does not depend on the index of the transmitters and each each transmitter state law satisfies the system  (\ref{eq: state_evo}).
Thus, the indistinguishability property holds. \hfill \ensuremath{\blacksquare}

\subsubsection{Solution to the mean-field response problem}
We are now in position to define the solution concept of the SDE (\ref{eq: state_evo}).
\begin{definition} \label{defifokker}
Let $T,T'>0$ such that $[T, T']$ is the horizon of the game.
We say that the state distribution $m_t(\underline{s})$ is a weak solution to the state dynamics (\ref{eq: state_evo}) if $m_t$ is integrable over $[T,T']$ and for any infinitely continuously differentiable function $\phi$  over $ (T,T')\times \mathbb{R}^3$ with compact support (test function), one has
\begin{equation}
\mathbb{E}_{m_T}[\phi_{T}(.)]+\int_{T}^{T'} \mathbb{E}_{m_t}\left[\partial_t\phi_t(.)-\langle \nabla_{\underline{s}}\phi_t, D^*(t,\underline{s})\rangle_{\mathbb{R}^2}+\frac{\eta^2}{2}\Delta_{\underline{s}}\phi_t  \right]=0,
\end{equation}
where
$\mathbb{E}_{m_t}$ is the expectation with the respect to $m_t$ and $D^*(t,\underline{s})$ is the drift vector $[-\alpha(t,\underline{s}),\frac{1}{2}(\underline{\mu}-\underline{h}(t))]^{\text{T}}.$
\end{definition}
Assuming that $D^*$ is sufficiently regular in time $t$ and in state $\underline{s},$ we examine the existence of
 a  solution
$\underline{s}(.)$ which is $\mathcal{F}$-adapted,
\begin{equation}
 \mathbb{E}\left[\sup_{t\in [T,T']}\ | \underline{s}(t)|^2 \right] <+\infty,
\end{equation}
 and $\underline{s}(�)$ has continuous paths.

A consequence of It\^o's formula \citep[see e.g.][]{Karatzas_Shreve_1991} states that the law of the SDE starting from distribution $m_T$ is a weak solution of the partial differential equation  (\ref{eq: state_evo}).
We deduce, from the Definition \ref{defifokker}, the equation satisfied by the distribution of the states, i.e.,
 {\it Fokker-Planck-Kolmogorov forward equation} is given by:
\begin{equation}
\frac{\partial m_t}{\partial t} - \frac{\partial}{\partial E}(m_t \alpha )+div_{\underline{h}}(m_t  \frac12(\underline{\mu} - \underline{h}) ) - \frac{\eta^2}{2} \Delta_{\underline{h}} m_t=0.
\end{equation}
With the new parameters of the game defined as:
\begin{equation}
\hat{I}(t,m_t) = \int_{\underline{s}} |\underline{h}|^2 \alpha(t,\underline{s}) m_t(\text{d}\underline{s}),
\end{equation}
\begin{equation}
\widehat{\gamma}(\underline{s}(t),m_t) = \frac{p(t)|\underline{h}(t)|^2}{\sigma^2+ \hat{I}(t,m_t)},
\end{equation}
\begin{equation}
\hat{u}(t) = \frac{R f(\widehat{\gamma}(\underline{s}(t),m_t))}{p(t)} =: \hat{r}(\underline{s}(t),p(t),m_t),
\end{equation}
we can formulate the mean-field response problem in which each generic user best-responds to the mean-field:
\begin{equation}
\hat{v}_T = \sup_{p(T \to T')} \mathbb{E} \biggl[ q(\underline{s}(T')) + \int_T^{T'} \hat{r}(\underline{s}(t),p(t),m_t^*) \text{d}t \biggr],
\end{equation}
where $m_t^*$ is the mean-field optimal trajectory and $m_T$ is assumed to be given.
A solution of the mean-field response problem is a solution of
\begin{equation}
\label{eq:HJBF-FPK}
\left\{
\begin{array}{ll}
\frac{\partial \hat{v}_t}{\partial t} + \tilde{H}(\underline{s}(t),\frac{\partial \hat{v}_t}{\partial E},m_t) + \frac12 \langle \underline{\mu} - \underline{h},\nabla_{\underline{h}} \hat{v}_t\rangle_{\mathbb{R}^2}  +\frac{\eta^2}{2} \Delta_{\underline{h}} \hat{v}_t=0, \\
\frac{\partial m_t}{\partial t} + \frac{\partial}{\partial E}(m_t \frac{\partial}{\partial u'} \tilde{H}(\underline{s}(t),\frac{\partial \hat{v}_t}{\partial E} ,m_t) )+div_{\underline{h}}(m_t  \frac12(\underline{\mu} - \underline{h}) ) = \frac{\eta^2}{2} \Delta_{\underline{h}} m_t, \\
\end{array}
\right.
\end{equation}
with $\hat{v}_{T'} = q(s(T')),\ $  $m_T$ known and
\begin{equation}
\tilde{H}(\underline{s},u',m) = \sup_p\{ \hat{r}(\underline{s},p,m) -p.u'\}.
\end{equation}

\noindent
As for the SDG case, the first equation is a Hamilton-Jacobi-Bellman-Fleming equation. But it is now coupled with a Fokker-Planck-Kolmogorov equation. The former one is a backward equation whereas the latter one is a forward equation. The other main difference between the SDG and the MFG is that in the former, each transmitter needs full knowledge of the channel states and the transmit powers of the other players to compute the outcome of the game, whereas in the latter only the knowledge of individual state and the mean-field is required to compute the outcome. In our model, although this mean-field cannot be directly known by the transmitters or the common receiver, it aggregates in the interference term which can be known by the receiver. Consequently, once the game is solved, given the interference at the receiver (which can be broadcast to every transmitters), its channel state and its battery state, each transmitter knows the power value it should use. The solution provides every transmitter a function of interference, channel state and battery state which output is the stable power control policy.

\subsubsection{Uniqueness of the solution}
Interestingly, a sufficient condition can be given for the solution of the mean-field response problem to be unique. First, we recall the definition of positiveness for an operator.

\begin{definition}[Positiveness of an operator]
\newline We say that the operator $\mathcal{O}:\, \mathcal{E} \to \mathcal{E}$ is positive (denoted by $\mathcal{O} \succ 0$) if
\[
 \forall x \neq 0_{\mathcal{E}},\; \langle x, \mathcal{O} x \rangle_{\mathcal{E}} > 0,
\]
where $0_{\mathcal{E}}$ is the neutral element of $\mathcal{E}$.
\end{definition}

\begin{proposition}[Uniqueness of the mean-field response problem solution]
A sufficient condition for the uniqueness of the solution to the mean-field response problem is for all triplet $(\underline{s},u',m) \in \mathbb{R}^3 \times \mathbb{R} \times \mathbb{P}(\mathbb{R}^3)$,
\begin{equation}
  \begin{aligned}
  -&\frac{1}{m}\frac{\partial}{\partial m} \tilde{H}(\underline{s},u',m) \succ 0,\\
&\frac{\partial^2}{\partial u'^2} \tilde{H}(\underline{s},u',m) > 0, \\
& \frac{\partial^2}{\partial m \partial u'} \tilde{H}(\underline{s},u',m) -\frac12  \frac{(\frac{\partial^2}{\partial m \partial u'} \tilde{H}(\underline{s},u',m))^2}{\frac{\partial^2}{\partial u'^2} \tilde{H}(\underline{s},u',m)} \succ 0.
 \end{aligned}
\end{equation}
\end{proposition}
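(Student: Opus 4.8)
The plan is to adapt the classical monotonicity argument of Lasry and Lions for mean-field games \citep{lasry-jjm-2007} to the coupled Hamilton--Jacobi--Bellman--Fleming / Fokker--Planck--Kolmogorov system (\ref{eq:HJBF-FPK}). Suppose the mean-field response problem admits two solutions $(\hat v^1_t,m^1_t)$ and $(\hat v^2_t,m^2_t)$, each satisfying the terminal condition $\hat v^j_{T'}=q(\underline s(T'))$ and the same prescribed initial mean field $m^j_T=m_T$. Write $\delta v:=\hat v^1-\hat v^2$, $\delta m:=m^1-m^2$, and $\delta v':=\partial_E\delta v$. The first step is to differentiate the duality pairing $t\mapsto\int_{\underline s}\delta v_t\,\delta m_t(\d\underline s)$ in time, inserting the expression for $\partial_t\hat v^j$ from the first equation of (\ref{eq:HJBF-FPK}) and for $\partial_t m^j$ from the second.

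The second step is to discard every contribution that does not feel the coupling. The channel part of the dynamics is an Ornstein--Uhlenbeck drift $\tfrac12(\underline\mu-\underline h)$ with constant diffusion $\eta$, identical for both solutions; hence, after integrating by parts in $\underline h$, the transport terms $\tfrac12\langle\underline\mu-\underline h,\nabla_{\underline h}\delta v\rangle$ and the diffusion terms $\tfrac{\eta^2}{2}\Delta_{\underline h}$ cancel pairwise. Integrating the energy-transport term once by parts in $E$, and performing a first-order expansion of $\tilde H(\underline s,\partial_E\hat v^1,m^1)-\tilde H(\underline s,\partial_E\hat v^2,m^2)$ and of $m^1\partial_{u'}\tilde H(\underline s,\partial_E\hat v^1,m^1)-m^2\partial_{u'}\tilde H(\underline s,\partial_E\hat v^2,m^2)$ in $(\delta v',\delta m)$, the two terms proportional to $\partial_{u'}\tilde H\,\delta v'\,\delta m$ cancel and one is left with
\[
\frac{\d}{\d t}\int_{\underline s}\delta v_t\,\delta m_t(\d\underline s)=\int_{\underline s}\Big[m\,\tfrac{\partial^2\tilde H}{\partial u'^2}(\delta v')^2+m\,\tfrac{\partial^2\tilde H}{\partial m\,\partial u'}\,\delta v'\,\delta m-\tfrac{\partial\tilde H}{\partial m}(\delta m)^2\Big](\d\underline s).
\]
This expansion is legitimate because $\tilde H$ depends on the measure argument only through the scalar $\hat I(t,m)=\int|\underline h|^2\alpha(t,\underline s)\,m(\d\underline s)$, which is linear in $m$.

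The third step is to integrate this identity over $[T,T']$: the left-hand side yields $\int\delta v_{T'}\delta m_{T'}-\int\delta v_T\delta m_T=0$, since $\delta v_{T'}\equiv0$ and $\delta m_T\equiv0$. Hence the space-time integral of the bracketed quadratic form in $(\delta v',\delta m)$ vanishes. Dividing the integrand by $m\ge0$ and grouping the terms (completion of a square in $\delta v'$), the three hypotheses of the proposition --- strict positivity of $\partial^2_{u'u'}\tilde H$, positiveness of $-\tfrac1m\partial_m\tilde H$, and positiveness of the Schur-complement-type operator $\partial^2_{mu'}\tilde H-\tfrac12(\partial^2_{mu'}\tilde H)^2/\partial^2_{u'u'}\tilde H$ --- make this form definite, so the integrand can vanish only where $\delta v'\equiv0$ and $\delta m\equiv0$. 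From $\delta m\equiv0$ we obtain $m^1=m^2$; plugging this common mean field back into the backward Hamilton--Jacobi--Bellman--Fleming equation with the common terminal datum $q$ gives $\hat v^1=\hat v^2$, which is the asserted uniqueness.

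The main obstacle lies in the two steps glossed over above. First, the integrations by parts require enough decay of $m_t$ as $|\underline h|\to+\infty$ and enough control of the boundary term at the battery constraint $E=0$; these must be extracted from the regularity assumed on the drift $D^*$ and on the data $q$ and $m_T$ (see Definition~\ref{defifokker}), so that all boundary contributions genuinely vanish. Second, differentiating $\tilde H$ with respect to the measure argument and bounding the remainder of the Taylor expansion must be made rigorous --- this is precisely where the linearity of $\hat I(\cdot,m)$ in $m$ is crucial, since it turns $\partial_m\tilde H$ into a bona fide bounded linear functional and the pairings $\langle\delta m,\mathcal O\,\delta m\rangle$ into well-defined quadratic forms, so that the operator-positiveness hypotheses can be invoked exactly as stated.
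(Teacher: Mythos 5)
Your argument is correct and follows essentially the same route as the paper's proof in Appendix~B: the Lasry--Lions cross-duality pairing $\int (\delta v_t)(\delta m_t)\,\mathrm{d}\underline{s}$, integration by parts so that the Ornstein--Uhlenbeck transport and diffusion terms cancel pairwise, and reduction to a quadratic form in $(\delta m, \partial_E \delta v)$ whose definiteness is exactly the three stated positivity hypotheses. The only technical difference is that where you perform a formal first-order Taylor expansion (and must then control the remainder), the paper uses the convex interpolation $m_{\lambda,t}=(1-\lambda)m_{1,t}+\lambda m_{2,t}$, $v_{\lambda,t}=(1-\lambda)v_{1,t}+\lambda v_{2,t}$ and the derivative $\frac{d}{d\lambda}\bigl(C_{\lambda}/\lambda\bigr)$ to obtain the same quadratic form exactly, concluding by strict monotonicity of the pairing in $t$ rather than by the vanishing of its time integral --- two interchangeable ways of finishing the same argument.
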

The proof of this proposition is given in Appendix~\ref{ap:proof2}.

\section{Conclusion}\label{sec: conclusion}

This paper provides a stochastic differential game formulation of the energy-efficient power control problem initially introduced in \cite{goodman-pc-2000}. This formulation allows one to better optimize the global efficiency of the network, account for long-term energy constraints, and take into account propagation effects such as time correlation for the channel gains. The problem is that this model becomes intractable when the number of transmitters becomes large. Instead of seeing large networks as a curse, they can be seen as a blessing since under the large system assumption, the game can be approximated by a mean-field game. Under the assumption of individual state information, the idea is precisely to exploit the large number of players to simplify the analysis.
The authors believe the present paper provides several interesting results to go into this direction but must admit that the numerical analysis and the design aspect still require a lot of efforts to make this approach more practical (in the same way as random matrix theory was initially introduced in the wireless literature by \cite{tse-tit-1999} and shown to be of practical interest later on by e.g., \cite{dumont-it-2010}). Nonetheless, this paper provides several interesting results on the mean-field game approach. Under the exchangeability assumption, the stochastic differential game is shown to converge to a mean-field game as the number of players increases. This new game simplifies and even makes possible the equilibrium analysis since the equilibrium derivation only requires the knowledge of the individual state and the mean-field to solve a system of two equations. To be more precise, each transmitter needs to know its channel state, its battery state and the instantaneous interference it undergoes. 
For this, the receiver only needs to feed back the instantaneous interference. In the mean-field model, this instantaneous interference is the same for all the transmitters. Hence, the required signal is a broadcast to all the transmitters.
Remarkably, this signal is fully scalable since, in theory (up to quantization effects), the amount of signalling does not depend of the number of transmitters. This framework allows us to derive simple sufficient conditions for the existence and uniqueness of an equilibrium power control policy.

\appendix

\section{Proof of Proposition 1}\label{ap:proof1}

 Both results can be proven by using Ito's formula \citep[see e.g.][]{Karatzas_Shreve_1991}. For the mean, from (\ref{eq:channellaw}), one has
\begin{equation}
 \text{d} \mathbb{E}[\underline{h}_i(t)] = \frac12 (\underline{\mu} - \mathbb{E}[\underline{h}_i(t)]) \text{d}t,
\end{equation}
then $\mathbb{E}[\underline{h}_i(t)] = \underline{\mu}(1 - e^{-\frac t2}) + \underline{h}_i(0)e^{-\frac t2}$. The limit when $t$ goes to $+\infty$ writes
\begin{equation}
 \lim_{t\to + \infty} \mathbb{E}[\underline{h}_i] = \underline{\mu}.
\end{equation}

For the variance, assume that for the two components $x_i$ and $y_i$ of $h_i$:
\begin{equation}
\begin{array}{ll}
 \text{d} x_i(t) = g_x(t) \text{d}t + \eta \text{d}\mathbb{W}_x, \\
 \text{d} y_i(t) = g_y(t) \text{d}t + \eta \text{d}\mathbb{W}_y,
\end{array}
\end{equation}
with $\mathbb{W}_x$ and $\mathbb{W}_y$ two independent Wiener processes of dimension $1$.
Then
\begin{equation}
 \text{d} x_i(t)^2 = (2x_i(t) g_x(t) + \eta^2) \text{d}t + 2 x_i(t) \eta \text{d}\mathbb{W}_x,
\end{equation}
and
\begin{equation}
 \text{d} \mathbb{E}[x_i(t)^2] = \mathbb{E}[(2x_i(t) g_x(t) + \eta^2)] \text{d}t.
\end{equation}

\noindent
If $g_x(t) = 0$, $\mathbb{E}[x_i(t)^2] = x_i(0)^2 + \eta^2 t$ and $\lim_{t \to \infty} \mathbb{E}[x_i(t)^2] = \infty$. That is the reason why a deterministic term is needed in (\ref{eq:channellaw}). With $g_x(t) = \frac 12(\mu_x - x_i(t))$, one has
\begin{equation}
 \text{d}\mathbb{E}[x_i(t)^2] = \mathbb{E} [2 x_i(t)\frac 12(\mu_x - x_i(t)) + \eta^2] \text{d}t,
\end{equation}
then
\begin{equation}
 \frac{\text{d}\mathbb{E}[x_i(t)^2]}{\text{d}t} = -\mathbb{E} [x_i(t)^2] + \mu_x \mathbb{E} [x_i(t)] + \eta^2.
\end{equation}

\noindent
The solution of this differential equation has the form
\begin{equation}
\begin{aligned}
 \mathbb{E}[x_i(t)^2] = &\biggl(x_i(0)^2 + \int_0^t\bigl(\mu_x\mathbb{E}[x_i(t')] + \eta^2\bigr)e^{ t'} \text{d}t'\biggr)e^{- t}, \\
= & \biggl[\bigl(\mu_x^2 + \eta^2\bigr)e^{t'} + 2 \mu_x \bigl(x_i(0) - \mu_x\bigr)e^{\frac{t'}{2}}\biggr]_0^t e^{- t}, \\
= & \bigl(\mu_x^2 + \eta^2\bigr)(1-e^{-t}) + 2 \mu_x \bigl(x_i(0) - \mu_x\bigr)(e^{-\frac{t}{2}} -e^{-t}),
\end{aligned}
\end{equation}
thus
\begin{equation}
 \lim_{t\to + \infty} \mathbb{E}[x_i(t)^2] = \mu_x^2 + \eta^2.
\end{equation}

\noindent
The analogous is true for $y_i$. Hence we have
\begin{equation}
 \lim_{t\to + \infty} \mathbb{E}[|\underline{h}_i(t)|^2] = |\underline{\mu}|^2 + 2\eta^2,
\end{equation}
and finally
\begin{equation}
 \lim_{t\to + \infty} \mathbb{E}[|\underline{h}_i(t)|^2] - \mathbb{E}[|\underline{h}_i(t)|]^2 = 2\eta^2.
\end{equation}

Regarding to the probability density functions, applying the Kolmogorov forward equation to the state $\underline{h}_i$ with the dynamics given in (\ref{eq:channellaw}), one has for the component $x_i$
\begin{equation}\label{eq:KFh}
\begin{aligned}
 \frac{\partial m_x(x_i,t)}{\partial t} = &- \frac{\partial}{\partial x_i} \bigl[m_x(x_i,t)\frac12 (\mu_x - x_i)\bigr] + \frac{\eta^2}{2} \frac{\partial^2 m_x(x_i,t)}{\partial x_i^2}, \\
= &\frac12 m_x(x_i,t) - \frac12 (\mu_x - x_i) \frac{\partial m_x(x_i,t)}{\partial x_i} + \frac{\eta^2}{2} \frac{\partial^2 m_x(x_i,t)}{\partial x_i^2}.
\end{aligned}
\end{equation}
The stationary case gives
\begin{equation}\label{eq:KFsta}
0= \frac12 m_x(x_i) - \frac12 (\mu_x - x_i) \frac{\partial m_x(x_i)}{\partial x_i} + \frac{\eta^2}{2} \frac{\partial^2 m_x(x_i)}{\partial x_i^2}.
\end{equation}
One can check that $\hat{m}_x(x_i) = \frac{1}{\eta \sqrt{2 \pi}} e^{-\frac{(x_i - \mu_x)^2}{2 \eta^2}}$ is a solution of (\ref{eq:KFsta}). This is the stationary density of $x_i$. The analogous can also be written for $y_i$: $\hat{m}_y(y_i) = \frac{1}{\eta \sqrt{2 \pi}} e^{-\frac{(y_i - \mu_y)^2}{2 \eta^2}}$. \hfill \ensuremath{\blacksquare}

\section{Proof of Proposition 4}\label{ap:proof2}

 The proof follows the the same principle as in chapter \emph{Risk-sensitive mean-field games} in the notes \emph{Mean-field stochastic games} by Tembine. Only the sketch of the proof is given here. To prove the uniqueness of the solution, we suppose that there exists two solutions $(v_{1,t},m_{1,t}),(v_{2,t},m_{2,t})$ of the above system. We want to find a sufficient condition under which the quantity $\int_{\underline{s}} (v_{2,t}(\underline{s})-v_{1,t}(\underline{s}))(m_{2,t}(\underline{s})-m_{1,t}(\underline{s}))\text{d}\underline{s}$ is monotone in time, which is not possible.
\begin{equation}
 \left\{
\begin{array}{cc}
 m_{1,T}(\underline{s}) = m_{2,T}(\underline{s}), \\
v_{1,T'}(\underline{s}) = v_{2,T'}(\underline{s}),
\end{array}
\right.
\end{equation}

Compute the time derivative
\begin{equation}
\begin{aligned}
S_t =& \frac{d}{dt}\biggl(\int_{\underline{s}} (v_{2,t}(\underline{s})-v_{1,t}(\underline{s}))(m_{2,t}(\underline{s})-m_{1,t}(\underline{s}))\text{d}\underline{s}\biggr),\\
=& \int_{\underline{s}} (\frac{\partial{v_{2,t}}}{\partial t} (\underline{s})-\frac{\partial{v_{1,t}}}{\partial t} (\underline{s}))(m_{2,t}(\underline{s})-m_{1,t}(\underline{s}))\text{d}\underline{s},\\
&+ \int_{\underline{s}} (v_{2,t}(\underline{s})-v_{1,t}(\underline{s}))(\frac{\partial{m_{2,t}}}{\partial t}(\underline{s})-\frac{\partial{m_{1,t}}}{\partial t}(\underline{s}))\text{d}\underline{s}.
\end{aligned}
\end{equation}
To express the first term, the difference between the two HJBF equations is taken and multiplied by $m_{2,t}-m_{1,t}$:
\begin{equation}
 \begin{aligned}
&\int_{\underline{s}} (\frac{\partial{v_{2,t}}}{\partial t}-\frac{\partial{v_{1,t}}}{\partial t})(m_{2,t}-m_{1,t})\text{d}\underline{s} =\\
&\int_{\underline{s}}\tilde{H}(\underline{s}(t),\frac{\partial v_{1,t}}{\partial E},m_{1,t})(m_{2,t} - m_{1,t}) \text{d}\underline{s} - \int_{\underline{s}}\tilde{H}(\underline{s}(t),\frac{\partial v_{2,t}}{\partial E},m_{2,t})(m_{2,t} - m_{1,t})\text{d}\underline{s} \\
&+ \int_{\underline{s}}\frac{\eta^2}{2} \frac{\partial^2 v_{1,t}}{\partial h^2}(m_{2,t} - m_{1,t}) \text{d}\underline{s} - \int_{\underline{s}}\frac{\eta^2}{2} \frac{\partial^2 v_{2,t}}{\partial h^2}(m_{2,t} - m_{1,t})\text{d}\underline{s} \\
&+ \frac12 \int_{\underline{s}} \langle\underline{\mu}-\underline{h}(t), \nabla_{\underline{h}}v_{1,t} \rangle_{\mathbb{R}^2}(m_{2,t} - m_{1,t}) \text{d}\underline{s} - \frac12 \int_{\underline{s}} \langle\underline{\mu}-\underline{h}(t), \nabla_{\underline{h}}v_{2,t} \rangle_{\mathbb{R}^2}(m_{2,t} - m_{1,t}) \text{d}\underline{s}.
\end{aligned}
\end{equation}
For the second term, the difference between the two FPK equations is taken and multiplied by $v_{2,t} - v_{1,t}$:
\begin{equation}
\begin{aligned}
&\int_{\underline{s}} (\frac{\partial{m_{2,t}}}{\partial t} - \frac{\partial{m_{1,t}}}{\partial t})(v_{2,t}-v_{1,t})\text{d}\underline{s} =\\
&-\int_{\underline{s}}\frac{\partial}{\partial E}(m_{2,t}\frac{\partial}{\partial u'} \tilde{H}(\underline{s}(t),\frac{\partial v_{2,t}}{\partial E},m_{2,t}))(v_{2,t}-v_{1,t})\text{d}\underline{s}\\
&+\int_{\underline{s}} \frac{\partial}{\partial E}(m_{1,t}\frac{\partial}{\partial u'} \tilde{H}(\underline{s}(t),\frac{\partial v_{1,t}}{\partial E},m_{1,t}))(v_{2,t}-v_{1,t})\text{d}\underline{s}\\
&+\int_{\underline{s}} \frac{\eta^2}{2} \frac{\partial^2 m_{2,t}}{\partial h^2} (v_{2,t}-v_{1,t})\text{d}\underline{s}-\int_{\underline{s}} \frac{\eta^2}{2} \frac{\partial^2 m_{1,t}}{\partial h^2}(v_{2,t}-v_{1,t})\text{d}\underline{s} \\
&+\frac12 \int_{\underline{s}}  div_{\underline{h}}(m_{2,t}(\underline{\mu}-\underline{h}(t))) (v_{2,t}-v_{1,t}) \text{d}\underline{s} -\frac12 \int_{\underline{s}}  div_{\underline{h}}(m_{1,t}(\underline{\mu}-\underline{h}(t))) (v_{2,t}-v_{1,t}) \text{d}\underline{s} .
\end{aligned}
\end{equation}
By integration by parts
\begin{equation}
\int_{\underline{s}} div_{\underline{s}}(k) \phi\text{d}\underline{s} = - \int_{\underline{s}} k\, div_{\underline{s}}(\phi)\text{d}\underline{s},
\end{equation}
then
\begin{equation}
\begin{aligned}
&\int_{\underline{s}} (\frac{\partial{m_{2,t}}}{\partial t} - \frac{\partial{m_{1,t}}}{\partial t})(v_{2,t}-v_{1,t})\text{d}\underline{s} = \\
&\int_{\underline{s}} (m_{2,t}\frac{\partial}{\partial u'} \tilde{H}(\underline{s}(t),\frac{\partial v_{2,t}}{\partial E},m_{2,t}))(\frac{\partial v_{2,t}}{\partial E}-\frac{\partial v_{1,t}}{\partial E})\text{d}\underline{s}\\
&-\int_{\underline{s}} (m_{1,t}\frac{\partial}{\partial u'} \tilde{H}(\underline{s}(t),\frac{\partial v_{1,t}}{\partial E},m_{1,t}))(\frac{\partial v_{2,t}}{\partial E}-\frac{\partial v_{1,t}}{\partial E})\text{d}\underline{s}\\
&+\int_{\underline{s}} \frac{\eta^2}{2} \frac{\partial^2 m_{2,t}}{\partial h^2} (v_{2,t}-v_{1,t})\text{d}\underline{s}-\int_{\underline{s}} \frac{\eta^2}{2} \frac{\partial^2 m_{1,t}}{\partial h^2}(v_{2,t}-v_{1,t})\text{d}\underline{s} \\
&+\frac12 \int_{\underline{s}} m_{2,t} \langle\underline{\mu}-\underline{h}(t), \nabla_{\underline{h}}(v_{2,t}-v_{1,t}) \rangle_{\mathbb{R}^2} \text{d}\underline{s} -\frac12 \int_{\underline{s}} m_{1,t} \langle\underline{\mu}-\underline{h}(t), \nabla_{\underline{h}}(v_{2,t}-v_{1,t}) \rangle_{\mathbb{R}^2} \text{d}\underline{s}.
\end{aligned}
\end{equation}
The full derivative writes
\begin{equation}
 \begin{aligned}
S_t &= \int_{\underline{s}}\tilde{H}(\underline{s}(t),\frac{\partial v_{1,t}}{\partial E},m_{1,t})(m_{2,t} - m_{1,t}) \text{d}\underline{s}-\int_{\underline{s}} \tilde{H}(\underline{s}(t),\frac{\partial v_{2,t}}{\partial E},m_{2,t})(m_{2,t} - m_{1,t}) \text{d}\underline{s} \\
&-\int_{\underline{s}} m_{1,t}\frac{\partial}{\partial u'} \tilde{H}({\underline{s}}(t),\frac{\partial v_{1,t}}{\partial E},m_{1,t})(\frac{\partial v_{2,t}}{\partial E}-\frac{\partial v_{1,t}}{\partial E})\text{d}\underline{s}\\
&+\int_{\underline{s}} m_{2,t}\frac{\partial}{\partial u'} \tilde{H}({\underline{s}}(t),\frac{\partial v_{2,t}}{\partial E},m_{2,t})(\frac{\partial v_{2,t}}{\partial E}-\frac{\partial v_{1,t}}{\partial E})\text{d}\underline{s}. \\
\end{aligned}
\end{equation}
We now introduce
\begin{equation}
m_{\lambda,t} = (1-\lambda)m_{1,t}+\lambda m_{2,t}=m_{1,t} + \lambda(m_{2,t}-m_{1,t}),
\end{equation}
and in the same way
\begin{equation}
v_{\lambda,t} = (1-\lambda)v_{1,t}+\lambda v_{2,t}.
\end{equation}
We study the auxiliary integral
\begin{equation}
\begin{aligned}
C_{\lambda} &= \int_{\underline{s}} \tilde{H}({\underline{s}}(t),\frac{\partial v_{1,t}}{\partial E},m_{1,t})(m_{\lambda,t} - m_{1,t}) \text{d}\underline{s} -\int_{\underline{s}} \tilde{H}({\underline{s}}(t),\frac{\partial v_{\lambda,t}}{\partial E},m_{\lambda,t})(m_{\lambda,t} - m_{1,t}) \text{d}\underline{s} \\
&-\int_{\underline{s}} m_{1,t}\frac{\partial}{\partial u'} \tilde{H}({\underline{s}}(t),\frac{\partial v_{1,t}}{\partial E},m_{1,t})(\frac{\partial v_{\lambda,t}}{\partial E}-\frac{\partial v_{1,t}}{\partial E})\text{d}\underline{s}\\
&+\int_{\underline{s}}m_{\lambda,t}\frac{\partial}{\partial u'}
 \tilde{H}({\underline{s}}(t),\frac{\partial v_{\lambda,t}}{\partial E},m_{\lambda,t})(\frac{\partial v_{\lambda,t}}{\partial E}-\frac{\partial v_{1,t}}{\partial E})\text{d}\underline{s}, \\
\end{aligned}
\end{equation}
which derivative is:
\begin{equation}
\begin{aligned}
\frac{d}{d\lambda}\biggl(\frac{C_{\lambda}}{\lambda}\biggr) &=
- \int_{\underline{s}}\frac{\partial}{\partial m} \tilde{H}({\underline{s}}(t),\frac{\partial v_{\lambda,t}}{\partial E},m_{\lambda,t})(m_{2,t} - m_{1,t})^2 \text{d}\underline{s}\\
&+\int_{\underline{s}} m_{\lambda,t} \frac{\partial^2}{\partial u'^2} \tilde{H}({\underline{s}}(t),\frac{\partial v_{\lambda,t}}{\partial E},m_{\lambda,t})(\frac{\partial v_{2,t}}{\partial E}-\frac{\partial v_{1,t}}{\partial E})^2\text{d}\underline{s} \\
&+\int_{\underline{s}} m_{\lambda,t} \frac{\partial^2}{\partial m \partial u'} \tilde{H}({\underline{s}}(t),\frac{\partial v_{\lambda,t}}{\partial E},m_{\lambda,t})(m_{2,t}-m_{1,t})(\frac{\partial v_{2,t}}{\partial E}-\frac{\partial v_{1,t}}{\partial E})\text{d}\underline{s}.
\end{aligned}
\end{equation}
Note that $\frac{\partial}{\partial m} \tilde{H}(.)$ and $ \frac{\partial^2}{\partial m \partial u'} \tilde{H}(.)$ are functional derivatives. They are defined such that for all $m'\in \mathbb{P}(\mathbb{R}^3)$
\begin{equation}
\begin{aligned}
 &\langle \frac{\partial}{\partial m} \tilde{H}({\underline{s}}(t),\frac{\partial v_{\lambda,t}}{\partial E},m_{\lambda,t}), m'-m_{\lambda,t} \rangle_{\mathbb{P}(\mathbb{R}^3)} = \\
&\lim_{t\to 0} \frac{\tilde{H}({\underline{s}}(t),\frac{\partial v_{\lambda,t}}{\partial E},m_{\lambda,t}+t(m'-m_{\lambda,t})) - \tilde{H}({\underline{s}}(t),\frac{\partial v_{\lambda,t}}{\partial E},m_{\lambda,t})}{t},
\end{aligned}
\end{equation}
and
\begin{equation}
\begin{aligned}
 &\langle \frac{\partial^2}{\partial m \partial u'} \tilde{H}({\underline{s}}(t),\frac{\partial v_{\lambda,t}}{\partial E},m_{\lambda,t}), m'-m_{\lambda,t} \rangle_{\mathbb{P}(\mathbb{R}^3)} =\\
&\lim_{t\to 0} \frac{\frac{\partial \tilde{H}}{\partial u'}({\underline{s}}(t),\frac{\partial v_{\lambda,t}}{\partial E},m_{\lambda,t}+t(m'-m_{\lambda,t})) - \frac{\partial \tilde{H}}{\partial u'}({\underline{s}}(t),\frac{\partial v_{\lambda,t}}{\partial E},m_{\lambda,t})}{t}.
\end{aligned}
\end{equation}
A sufficient condition for the uniqueness of the solution to the mean-field response problem is the monotonicity of the operator associated to
\[
 \left( \begin{array}{cc}
A_{11} & A_{12} \\
A_{21} & a_{22}  \end{array} \right)
\]
with
\begin{equation}
\begin{aligned}
A_{11} =& -\frac{1}{m}\frac{\partial}{\partial m} \tilde{H}, \\
A_{12} =& A_{21}=\frac12 \frac{\partial^2}{\partial m \partial u'} \tilde{H}, \\
a_{22} =& \frac{\partial^2}{\partial u'^2} \tilde{H}.
\end{aligned}
\end{equation}
This is true if
\begin{equation}
 \begin{aligned}
  &A_{11} \succ 0, \quad a_{22} > 0, \\
&A_{12} - \frac{A_{12}^2}{a_{22}} \succ 0.
 \end{aligned}
\end{equation}
\hfill \ensuremath{\blacksquare}


\bibliographystyle{spbasic}      
\bibliography{biblio-book-2011-03-13-bis}

\end{document}